\newtheorem{definition}{Definition}
\newtheorem{theorem}{Theorem}
\newtheorem{proof}{Proof}
\newtheorem{remark}{Remark}
\newtheorem{example}{Example}
\def\beq{\begin{equation}}
\def\eeq{\end{equation}}
\def\beqa{\begin{eqnarray}}
\def\eeqa{\end{eqnarray}}
\def\beqan{\begin{eqnarray*}}
\def\eeqan{\end{eqnarray*}}
\def\EE{{\mathbb{E}}}
\def\PP{{\mathbb{P}}}
\def\Wsf{ W}
\def\Xsf{ {\sf X}}
\def\Xc{\mathcal{X}}
\def\Yc{\mathcal{Y}}
\def\Dc{\mathcal{D}}
\def\dbf{\mathbf{d}}
\def\Rscr{\boldsymbol\varrho}
\def\R{\rho}
\def\GWregion{\mathfrak S_{GW}} 
\def\regopGW{\mathfrak R^{*}_{GW}}         
\def\RGWstarR{R_{GW}^*(M,\Rscr)}
\def\RGWstar{R_{GW}^*(M)}   
\def\Rstar{R^*(M)}              
\def\Rlb{{R}^{LB}(M)}         
\def\Rub{{R}_{GW}^{UB}(M)}         
\def\RubR{{R}_{GW}^{UB}(M,\Rscr)}         
\def\regop{\mathfrak R^{*}}         
\begin{document}

\title{\Huge Broadcast Caching Networks with Two Receivers and Multiple Correlated Sources}
\author{Parisa Hassanzadeh, Antonia M. Tulino, Jaime Llorca, Elza Erkip
\thanks{This work has been supported by NSF  grant \#1619129.}
\thanks{P. Hassanzadeh  and  E. Erkip are with the ECE Department of New York University, Brooklyn, NY. Email: \{ph990, elza\}@nyu.edu}
\thanks{J. Llorca  and A. Tulino are with Nokia Bell Labs, Holmdel, NJ, USA. Email:  \{jaime.llorca, a.tulino\}@nokia.com}
\thanks{A. Tulino is with the DIETI, University of Naples Federico II, Italy. Email:  \{antoniamaria.tulino\}@unina.it}
}

\maketitle

\begin{abstract}
The correlation among the content distributed across a cache-aided broadcast network can be exploited to reduce the delivery load on the shared wireless link. This paper considers a two-user three-file network with correlated content, and studies its fundamental limits for the worst-case demand. A class of achievable schemes based on a two-step source coding approach is proposed. 
Library files are first compressed using Gray-Wyner source coding, and then cached and delivered using a combination of correlation-unaware cache-aided coded multicast schemes. The second step is interesting in its own right and considers a multiple-request caching problem, 
whose solution requires coding in the placement phase. 
A lower bound on the optimal peak rate-memory trade-off is derived, which is used to evaluate the performance of the proposed scheme. It is shown that for symmetric sources the two-step strategy achieves the lower bound for large cache capacities, and it is within half of the joint entropy of two of the sources conditioned on the third source for all other cache sizes.
\end{abstract}

\section{Introduction}~\label{sec:Introduction}
Coded multicast transmissions can significantly increase the capacity of wireless access networks by leveraging the broadcast nature of the wireless channel and the content that is distributed across the network \cite{maddah14fundamental}. In this work, we consider the cache-aided coded multicast problem, extensively studied in the literature for a library composed of independent content \cite{maddah14fundamental,ji15order}, in the setting of a content library composed of correlated files, as investigated in \cite{timo2016rate,ITW2016,ISTC2016,ISIT2017,yang2017centralized}. Rate-memory-distortion trade-offs in a single-receiver network were studied in \cite{timo2016rate}, while \cite{ITW2016} and \cite{ISTC2016} considered more general networks with multiple receivers and provided schemes that exploit the correlation among the content files during the caching phase and delivery phase, respectively.  
This paper builds upon the results of \cite{ISIT2017}, which provides an information theoretic analysis of the peak delivery rate in a two-receiver two-file network, by considering a setting with two receivers and three correlated files. We explore the content correlations by first compressing the correlated library based on the {\em Gray-Wyner network} \cite{gray1974source}, and then treating the resulting encoded content as independent files. The achievable strategy can be generalized to an arbitrary number of files, but in this paper we focus on the three file scenario since it captures the essence of caching in broadcast networks with multiple files. 
In fact, the exponential complexity of Gray-Wyner source coding with the number of files makes the overall characterization with large number of files exceedingly hard without providing additional insight. 
Concurrent work in \cite{yang2017centralized} studies the second step of the proposed scheme for arbitrary number of files and users.

The main contributions of this paper, beyond the results presented in \cite{ISIT2017}, are as follows: 
\begin{itemize}
	\item We consider a broadcast caching network with {\em three} correlated files, and characterize the optimal or near-optimal peak rate-memory trade-off. 
	
	\item The proposed correlation-aware achievable scheme for the three file scenario, as in the two-file case studied in  \cite{ISIT2017}, is a two-step scheme, for which the second step results in multiple per-user requests. However, unlike the two-file case, receiver requests are not symmetric across the content generated 
	by the first step. The aforementioned asymmetry results in an interesting three-file two-request problem, where prefetching coded content is key for optimality in the low cache capacity regime, as opposed to the single request setting where caching uncoded content is sufficient \cite{yu2017characterizing}. 
				  


	\item We discuss the optimality of the proposed two-step correlation-aware scheme by comparing its achievable rate with the lower bound on the optimal rate-memory trade-off. We identify a set of operating points in the Gray-Wyner region, for which the two-step scheme is optimal over a range of cache capacities, and discuss how far away it is from optimal for other capacities.	
\end{itemize}

The paper is organized as follows. Sec.~\ref{sec:problem} presents the system model and problem formulation. Sec.~\ref{sec: GW scheme} proposes a class of achievable schemes based on Gray-Wyner compression, which converts the original problem into a  multiple-request caching problem, studied in Sec.~\ref{sec: achievable GW-CACM}. The main results of the paper are provided in Sec.~\ref{sec:optimality}, and the paper is concluded in Sec.~\ref{sec:conclusion}.

\section{System  Model and Problem Statement}~\label{sec:problem}
 We consider a broadcast caching network composed of one sender (e.g., base station)
 with access to a library with three uniformly popular files generated by an $3$-component discrete memoryless source (3-DMS). The 3-DMS model $\Big(\Xc_1 \times \Xc_2 \times\Xc_3, \, p(x_1,x_2,x_3)\Big)$ consists of $3$ finite alphabets $\Xc_1,\Xc_2,\Xc_3$ and a joint pmf  $p(x_1,x_2,x_3)$ over $\Xc_1 \times \Xc_2\times\Xc_3$. The 3-DMS generates an i.i.d. random process $\{\Xsf_{1i},\Xsf_{2i},\Xsf_{3i}\}$ with $(\Xsf_1, \Xsf_2, \Xsf_3)\sim p(x_1,x_2,x_3)$.  For a block length $F$, library file $j \in \{1,2,3\}$ is represented by a sequence $\Xsf_j^F = (\Xsf_{j1},\dots,\Xsf_{jF})$, where $\Xsf_j^F\in \Xc^F_j$. The sender communicates with two receivers, $\{r_1,\, r_2\}$, over a shared error-free broadcast link. Each receiver is equipped with a cache of size $FM$ bits, where $M$ denotes the (normalized) cache capacity.
 
  The network operates in two phases: a caching phase taking place at network setup, in which caches are populated with content from the library, followed by a delivery phase where the network is used repeatedly in order to satisfy receiver demands. The overall scheme is referred to as a {\em cache-aided coded multicast scheme} (CACM).   A CACM scheme consists of the following components:
 \begin{itemize}
 	\item {\textbf{Cache Encoder:}} During the caching phase, the cache encoder designs the cache content of receiver $r_k$ using a mapping $$f^{\mathfrak C}_{r_k}:\Xc_1^F\times\Xc_2^F\times\Xc_3^F \rightarrow [1: 2^{FM}).$$ 
 	The cache configuration of  receiver $r_k$  is denoted  by 
 	$$Z_{r_k} = f^{\mathfrak  C}_{r_k}\Big(\{X_1^F,X_2^F, X_3^F\}\Big).$$

 	\item{\textbf{Multicast Encoder:}} During the delivery phase, each receiver requests a file from the library. The demand realization, denoted by $\dbf = (d_{r_1},d_{r_2}) \in \Dc \equiv \{1,2,3\}^2$, where $d_{r_k} \in\{1,2,3\}$ denotes the index of the file requested by receiver $r_k$, is revealed to the sender,
 	which then uses a fixed-to-variable mapping 
 	$$f^{\mathfrak M}:{\mathcal D} \times [1: 2^{FM})^2  \times \Xc_1^F\times\Xc_2^F\times \Xc_3^F  \rightarrow \Yc^\star$$ 
 	to generate and transmit a multicast codeword 
 	$$Y_{\dbf} = f^{\mathfrak  M}\Big(\dbf,\{Z_{r_k}\}, \{X_1^F,X_2^F,X_3^F\}\Big)$$ 
 	over the shared link.\footnote{We use $\star$ to indicate variable length.}

 	\item{\textbf{Multicast Decoders:}} Each receiver $r_k$ uses a mapping 
 	$$g^{\mathfrak  M}_{r_k} : \Dc \times \Yc^\star \times [1: 2^{FM}) \rightarrow \Xc_{d_{r_k}}^F$$ 
 	to recover its requested file,
 	$X_{d_{r_k}}^F$, using the received multicast codeword and its cache content as 
 	$$\widehat{X}_{d_{r_k}}^F = g^{\mathfrak  M}_{r_k} (\dbf, Y_{\dbf},Z_{{r_k}}).$$
 	
 \end{itemize}

 The worst-case probability of error of a CACM scheme is given by
 \begin{align} \label{perr}
 & P_e^{(F)} = \max_{\dbf} \;\max_{r_k}\;  \PP\left(\widehat{X}_{{d_{r_k}}}^F  \neq X_{{d_{r_k}}}^F \right).
 \end{align}

  In this paper, we consider the peak multicast rate, $R^{(F)}$, which corresponds to the worst-case demand,
 	\begin{equation} \label{peak-rate}
 	R^{(F)} =  \max_{\dbf \in \mathcal D} \; \frac{\EE[L(Y_{\dbf})]}{F}, 
 	\end{equation}
 	where $L(Y)$ denotes the length (in bits) of codeword $Y$, and the expectation is over the library files.

 	\begin{definition} \label{def:achievable-peak}
 		A peak rate-memory pair $(R,M)$ is {\em achievable} if there exists a sequence of CACM schemes for cache capacity $M$ and
 		increasing file size $F$, such that 
 		$$\lim_{F \rightarrow \infty} P_e^{(F)} = 0 \notag,$$ 
 		and 
 		$$\limsup_{F \rightarrow \infty}	R^{(F)} \leq  R.$$ 
 	\end{definition}

 	\begin{definition} \label{def:infimum-peak}
 		The peak rate-memory region, $\regop$, is the closure of the set of achievable peak rate-memory pairs $(R,M)$, and the optimal peak rate-memory function, $\Rstar$, is
 		$$\Rstar= \inf \{R:  (R,M) \in  \regop\}.$$
 	\end{definition}

\section{Proposed Correlation-Aware Scheme }\label{sec: GW scheme}
We propose a class of CACM schemes based on a two-step lossless source coding setup, as depicted in Fig.~\ref{fig:model}. The first step involves an extension of the two-component Gray-Wyner network \cite{gray1974source} to multiple sources, and the second step is a  lossless correlation-unaware multiple-request source coding scheme with distributed side information.  We refer to this scheme as {\em Gray-Wyner Cache-Aided Coded Multicast} (GW-CACM). 
The GW-CACM scheme exploits the correlation among the library content by first compressing the library using the three-file Gray-Wyner encoder depicted in Fig.~\ref{fig:GW}(a) and explained in detail in Sec.~\ref{subsec: GW network}. The three files are encoded into seven descriptions, such that: i) three of the descriptions contain information exclusive to only one file, and ii) the remaining descriptions comprise information common to more than one file.
\begin{figure}[t!] \centering
	\includegraphics[width=1\linewidth]{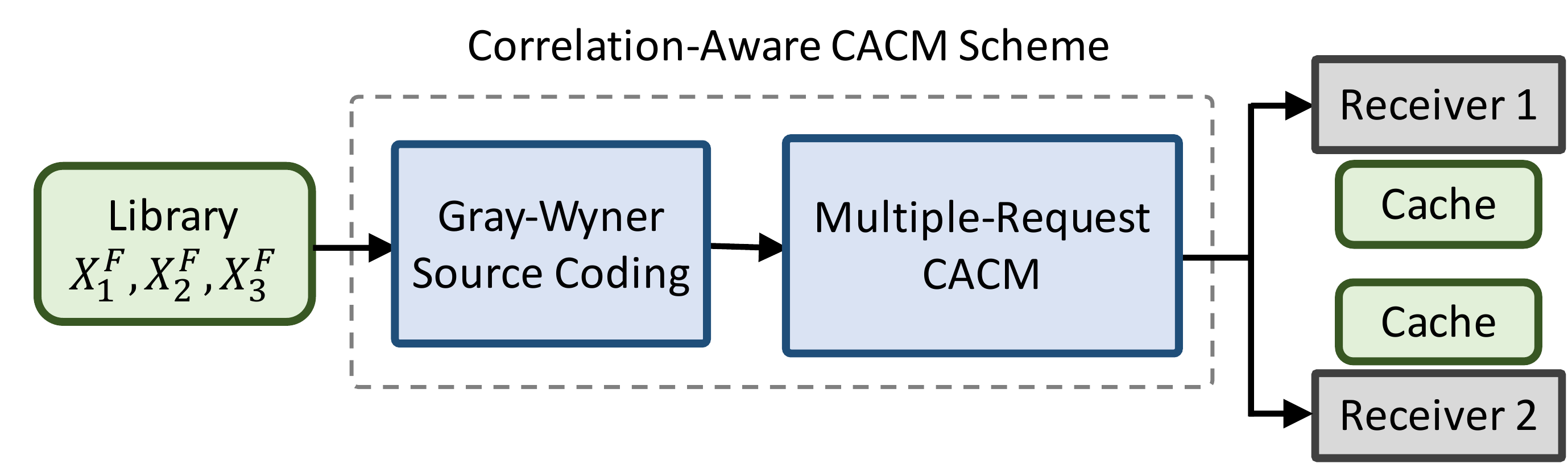}
	\caption{Two-step correlation-aware scheme.}
	\label{fig:model}
\end{figure}

This initial Gray-Wyner source coding step transforms a caching problem in a network with correlated content and receivers requesting only one file, into a caching problem with a larger number of files where receivers request multiple descriptions to recover the desired content. We assume that the CACM scheme in the second step is agnostic to the correlation among the content generated by the Gray-Wyner encoder, i.e., the second step is correlation-unaware. The two steps are jointly designed to optimize the performance of the overall scheme. Before formally describing the GW-CACM scheme, we briefly review the three-file Gray-Wyner network.

\subsection{Gray-Wyner Network}\label{subsec: GW network}
Gray-Wyner source coding was first introduced for two files in \cite{gray1974source}, in which a 2-DMS  is represented by one common description, and two private descriptions such that each of the files can be losslessly recovered from the common description and one of the private descriptions, asymptotically, as the file size $F\rightarrow \infty$. In \cite{gray1974source}, Gray and Wyner fully characterized the rate region for lossless reconstruction of both files. The Gray-Wyner network can be extended to three files such that the Gray-Wyner encoder observes a 3-DMS $(\Xsf_1, \Xsf_2,\Xsf_3 )$, and communicates $\Xsf_i$ to decoder $i\in\{1,2,3\}$. As depicted in Fig.~\ref{fig:GW}, the encoder is connected to the decoders through three types of error-free links with finite rate. There is one common link connecting the encoder to all three decoders, there are three links common to any two of the decoders, and finally there are three private links connecting the encoder to each decoder. Description $\Wsf_{\mathcal A} \in [1:2^{F\R_{\mathcal A}})$, is communicated to all decoders $i \in { \mathcal A} \subseteq\{1,2,3\}$, such that\footnote{With an abuse of notation, the subscripts of $\Wsf$ and $\R$ denote sets.}
\begin{itemize}
	\item	$\Wsf_{123}$ $\in [1:2^{F\R_{123}})$,   
	\item	$\Wsf_{12} \in [1:2^{F\R_{12}})$, $\Wsf_{13} \in [1:2^{F\R_{13}})$,  $\Wsf_{23} \in [1:2^{F\R_{23}})$, 
	\item	$\Wsf_1 \in [1:2^{F\R_1})$, $\Wsf_2 \in [1:2^{F\R_2})$, and  $\Wsf_3 \in [1:2^{F\R_3})$.
\end{itemize}
The Gray-Wyner region, $\GWregion$, is represented by the set of all rate-tuples
 $$\Rscr =\Big(\R_{123},\R_{12},\R_{13},\R_{23},\R_{1},\R_{2},\R_{3}\Big),$$
  for which any file $X_i^F$, $i\in\{1,2,3\}$,  can be losslessly reconstructed from the descriptions $\Big\{\Wsf_{123}, \Wsf_{ij}, \Wsf_{ik} , \Wsf_i \Big\}$ with $k, j \neq i$, asymptotically, as $F \rightarrow \infty$.

While the generalization of the Gray-Wyner network to multiple files has been studied in a number of papers, \cite{tandon2011multi,liu2010common,viswanatha2012subset}, the optimal characterization of the rate region for generic sources is nontrivial and is not known.

\begin{figure}
	\centering
		\includegraphics[width=0.95\linewidth]{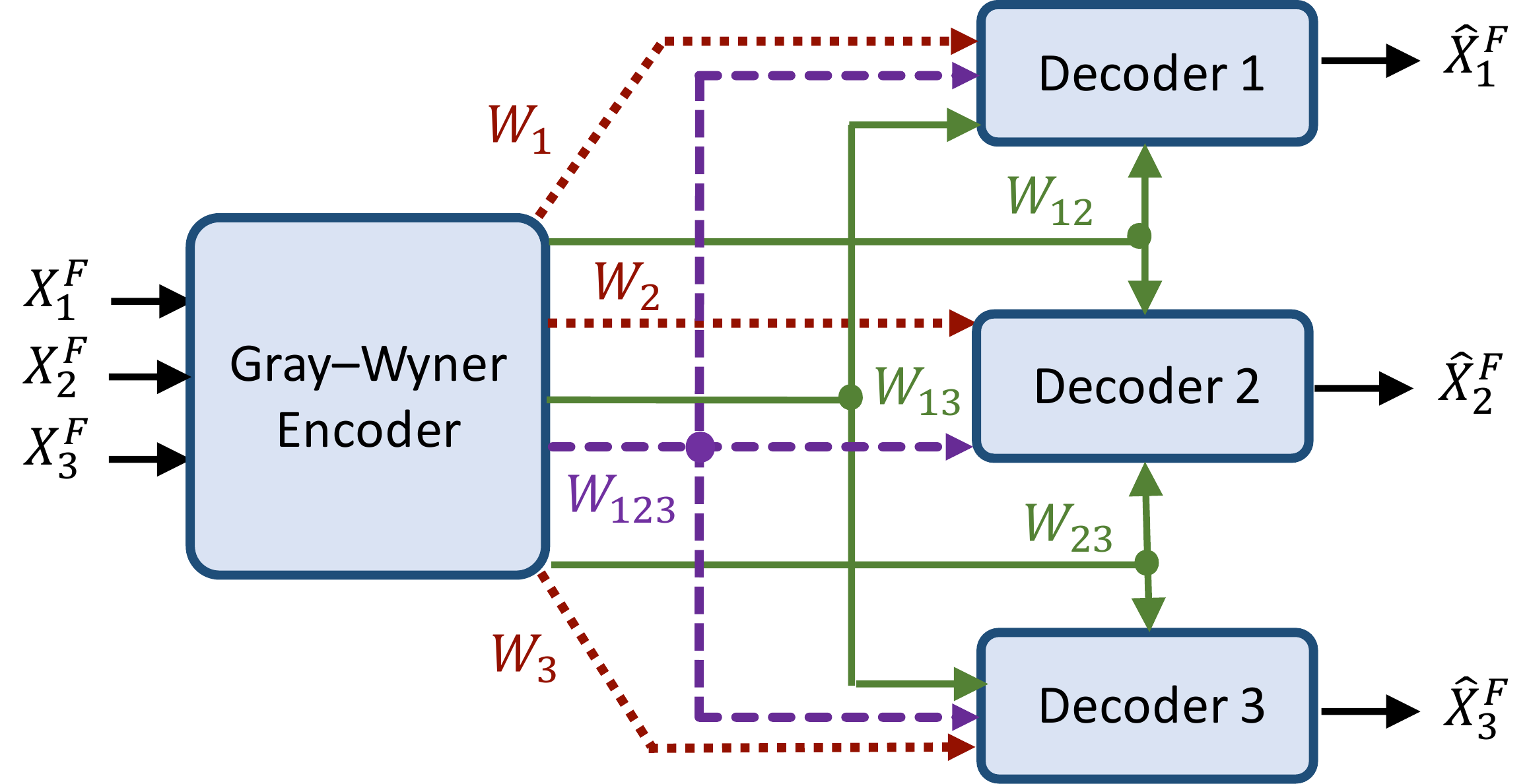} 
	\caption{Three-file Gray-Wyner network.}
	\label{fig:GW}
\end{figure}

\subsection{Gray-Wyner CACM  Scheme}\label{subsec: general GW-CACM} 
The Gray-Wyner network converts the library into a set of descriptions, each of which is required for the lossless reconstruction of one or more of the original files. 
For a given rate-tuple $\Rscr  \in \GWregion$, a GW-CACM scheme consists of:
\begin{itemize}
	\item {\bf Gray-Wyner Encoder}: Given the library $\{X_1^F$, $X_2^F$, $X_3^F\}$,
	the Gray-Wyner encoder at the sender computes descriptions $\{\Wsf_{s} :   s\in \mathcal S\}$, where $\mathcal S$ is the set of all nonempty subsets of $\{1,2,3\}$, using a mapping 
	$${f}^{GW}: \Xc_1^{F}\times\Xc_2^{F}\times\Xc_3^{F} \rightarrow \prod\limits_{s \in \mathcal S }  \Big[1:2^{F\R_s}\Big)  .$$	
	
	\item {\bf Multiple-Request Cache Encoder}: Given the compressed descriptions, the correlation-unaware cache encoder at the sender computes the Gray-Wyner based cache content at receiver $r_k$, as
	$$Z_{r_k} =  f_{r_k}^{{\mathfrak C_{GW}}}\Big (  \{\Wsf_{s} :   s\in \mathcal S\}  \Big).$$

	\item {\bf Multiple-Request Multicast Encoder}: For any demand realization $\dbf\in \mathcal D$ revealed to the sender, the correlation-unaware  Gray-Wyner-based multicast encoder generates and transmits the multicast codeword
	$$Y_\dbf^{{{GW}}} = f^{{\mathfrak M_{GW}}}\Big(\dbf,\{Z_{r_1},Z_{r_2}\}, 	\{\Wsf_{s} :   s\in \mathcal S\} \Big).$$
	
	\item {\bf Multicast Decoder}: Receiver $r_k$ decodes the descriptions corresponding to its requested file as
	$$\Big\{ \widehat{\Wsf}_s: \, s\in  \mathcal S_{d_{r_k}}  \Big\} = g^{{\mathfrak M_{GW}}}_{r_k}\Big(\dbf, Y_\dbf^{{{GW}}}, Z_{r_k}\Big),$$
	where 
	$$\mathcal S_{d_{r_k}} =  \Big\{ \{1,2,3\}, \{d_{r_k}, i\} ,\{ d_{r_k}, j \} , \{d_{r_k} \} \Big\},\;  i, j \neq d_{r_k}.$$
	
	\item {\bf Gray-Wyner Decoder}: Receiver $r_k $ decodes its requested file using the descriptions recovered by the multicast
	decoder, via a mapping 
	$$g^{GW}_{r_k}:  \prod\limits_{s\in \mathcal S_{d_{r_k}}  }  [1:2^{F\R_s })  \rightarrow \Xc_{d_{r_k}}^F,$$
	 as
	$$\widehat{X}_{d_{r_k}}^{F} = g^{GW}_{r_k}\Big(  \Big\{ \widehat{\Wsf}_s: \, s\in  \mathcal S_{d_{r_k}}  \Big\} \Big).$$
\end{itemize}

As in \eqref{peak-rate}, the peak GW-CACM multicast rate is
\begin{align} \label{average-rate-GW}
& R_{GW}^{(F)}(\Rscr) = \max_{\dbf \in \mathcal D} \;  \frac{\EE[L(Y^{GW}_{\dbf})]}{F}, \notag
\end{align}
where we explicitly show the dependence on the rate-tuple $\Rscr$.

For the class of GW-CACM schemes, since $\Rscr \in \GWregion$,  and $ \{ {\Wsf}_s: \, s\in  \mathcal S_{d_{r_k}}  \}  $ is a Gray-Wyner description of $X_{{d_{r_k}}}^F$ with $d_{r_k}\in\{1,2\}$,  in order to have a sequence of admissible GW-CACM schemes, i.e., a sequence of schemes such that
$\lim_{F \rightarrow \infty}  P_e^{(F)}= 0$,  with $P_e^{(F)}$ as defined in \eqref{perr},
we only need
\begin{align}
\lim_{F \rightarrow \infty} \max_{\dbf\in \mathcal D} \;\max_{r_k}\; 
\PP\Big(   \widehat{\Wsf}_s \neq   \Wsf_s,\, \forall   s\in  \mathcal S_{d_{r_k}}    \Big) = 0. \notag
\end{align}
Hence, a peak $\Rscr$-rate-memory pair $(R,M)$ is achievable if  there exists a sequence of admissible GW-CACM  schemes with rate $R_{GW}^{(F)}(\Rscr)$  such that  $\lim\sup_{F\rightarrow \infty} R_{GW}^{(F)}(\Rscr) \leq R$.

In line with Definition \ref{def:infimum-peak}, for the class of GW-CACM schemes, for a given rate-tuple $\Rscr \in\GWregion$, the peak $\Rscr$-rate-memory region, $\regopGW(\Rscr)$, is the closure of the set of all achievable peak $\Rscr$-rate-memory pairs $(R,M)$, and the peak $\Rscr$-rate-memory function, $\RGWstarR$ is 
\begin{align}
&\RGWstarR= \inf \{R:  (R,M) \in  \regopGW(\Rscr)\}. \notag
\end{align}
Finally, the peak GW-rate-memory function of the GW-CACM scheme, $\RGWstar$, is defined as
\begin{align}
&\RGWstar=\inf\{\RGWstarR:\,\Rscr\in\GWregion \}. \notag 
\end{align}
 
In the following sections, we analyze the rate-memory trade-off for the class of achievable GW-CACM schemes, and discuss its optimality.

\section{Multiple-Request CACM} \label{sec: achievable GW-CACM}
In this section,  we focus on the second step of the GW-CACM scheme depicted in Fig.~\ref{fig:model}, namely the multiple-request CACM, and  propose an achievable strategy, such that the Gray-Wyner encoder in the first step is restricted to operate  at a symmetric rate vector, i.e.,
$$\Big\{\Rscr\in\GWregion:\, \R_{12}=\R_{13}=\R_{23} = \R',\,  \R_{1}=\R_{2}=\R_{3} = \R  \Big\}, $$ 
and $\R_0$ is used to denote the rate of description $\Wsf_{123}$. 

The multiple-request CACM scheme arranges the seven descriptions generated by the Gray-Wyner encoder into three groups, referred to as {\em sublibraries}, namely $L_1$, $L_2$ and $L_3$.
Sublibrary $L_\ell$ contains the descriptions that are communicated to any subset of  size $\ell\in\{1,2,3\}$ of the Gray-Wyner decoders, i.e., each description in $L_\ell$ contains information common to $\ell$ files. Sublibrary $L_3=\{\Wsf_{123} \}$ contains information common to all three files and is referred to as the {\it common-to-all} sublibrary, $L_2 = \{\Wsf_{12}, \Wsf_{13}, \Wsf_{23} \}$ is the {\it common-to-two} sublibrary, and sublibrary $L_1=\{\Wsf_1 ,\Wsf_2 ,\Wsf_3\}$ contains information exclusive to each file and is referred to as the {\it private sublibrary}.  The multiple-request CACM  accounts for populating the receiver caches with content from sublibraries $L_1$, $L_2$ and $L_3$, and using the network repeatedly for different demand realizations. Each receiver's request corresponds to four descriptions, one description from $L_1$, two descriptions from $L_2$, and one from $L_3$ (hence the name multiple-request), which enable lossless recovery of its requested file. Even though the receivers request files from the original library independently and according to a uniform demand distribution, the structure of the corresponding demand in the multiple-request CACM is dictated by the collective of requested files, resulting in a non-uniform multiple-request demand that is not independent across the receivers. This is due to the fact that the descriptions requested by each receiver from sublibrary $L_2$ are determined from the entire demand vector. In the following we introduce a CACM scheme that is specifically designed for such structures of the demand. Multiple-request schemes in the literature, such as the ones in \cite{ji15efficient, ji2015caching}, in addition to applying to settings with equal-length files are designed for arbitrary demands, and therefore, result in higher load over the shared link.


The proposed multiple-request CACM scheme treats the descriptions in $L_1$, $L_2$, and $L_3$ as independent content and operates as follows: $i)$  the cache capacity $M$ is  divided among the three sublibraries, $ii)$  each sublibrary is cached independently from the others, and $iii)$ the content requested from the sublibraries is delivered without further coding across the sublibraries.  Specifically, the descriptions from each sublibrary are cached and delivered as follows: $i)$ description $\Wsf_{123}$ in $L_3$ is cached according to the Least Frequently Used (LFU)\footnote{LFU is a local caching policy that, in the setting of this paper, leads to all receivers caching the same part of the file.} strategy and delivered through uncoded  (naive)  multicasting, $ii)$ for the descriptions in $L_2$, a new two-request CACM scheme proposed in Sec.~\ref{sec: new scheme} is used, and finally $iii)$ sublibrary $L_1$ is cached and delivered according to the scheme proposed by Yu, Maddah-Ali and Avestimehr in \cite{yu2016exact}, referred to as \textit{YMA}.   
The  cache allocation among the three sublibraries, which takes on a water-filling-type solution \cite{Journal},  is the result  of an optimization aiming at minimizing the overall rate given by the sum of the rate achieved by each sublibrary based on their respective delivery mechanisms. The  proposed multiple-request CACM scheme can be described in terms of a cache encoder and a multicast encoder, whose detailed descriptions are given below.  

\begin{itemize}
	\item {\bf Cache Encoder}: The cache encoder populates the receiver caches such that: 
	\begin{itemize}
		\item If $M\in \Big[0,\, \frac{3}{2}\R' \Big)$, the descriptions in $L_1$ and $L_3$ are not cached at either receiver, and only the descriptions in $L_2$ are cached according to the caching phase of the two-request CACM scheme described in Sec.~\ref{sec: new scheme}.
		
		\item If $M\in \Big[ \frac{3}{2}\R'  ,\, \R_0+\frac{3}{2}(\R'+\R) \Big)$, receivers fill a portion equal to $\frac{3}{2}\R'$ from their caches with the descriptions in $L_2$ according to the caching strategy in the two-request CACM scheme.  The remainder of the cache, $M-\frac{3}{2}\R'$, is first allocated to caching identical bits of $L_3=\{\Wsf_{123}\}$ at both receivers, as per LFU caching, and the excess of capacity, if any, is used for storing the descriptions in $L_1$ according to YMA.

		\item If $M\in \Big[ \R_0+\frac{3}{2}(\R'+\R)  ,\, \R_0+3\R'+\frac{3}{2}\R\Big)$, a portion equal to $M-\R_0-\frac{3}{2}\R$ of the cache capacity is filled with the descriptions of $L_2$ according to the two-request CACM scheme, the common description $\Wsf_{123}$ is fully cached at both receivers, and $\frac{3}{2}\R$ of the capacity is allocated to storing the descriptions in $L_1$ according to YMA.
		
		\item If $M\in \Big[\R_0+3\R'+\frac{3}{2}\R ,\,  \R_0+3(\R'+\R)\Big]$, the descriptions in $L_2$ and $L_3$ are fully cached at both receivers, and the descriptions in $L_1$ are cached according to YMA over the remaining cache capacity $M-\R_0-3\R'$.  
	\end{itemize}
	
	\item {\bf Multicast Encoder}: The encoder transmits the descriptions in $L_1$ and $L_2$ according to the delivery phases of YMA and the two-request CACM scheme, respectively, while the portion of $\Wsf_{123}$ from sublibrary $L_3$ missing at each receiver cache is transmitted via uncoded  multicast.  
	
\end{itemize}
The rate achieved by the above CACM scheme will be provided in Theorem~\ref{thm:achievable rate peak} in Sec.~\ref{subsec: upperbound}. 

\subsection{Two-Request CACM Scheme} \label{sec: new scheme}
In this section, we explain in detail the CACM scheme used for the common-to-two sublibrary $L_2$. As mentioned in the previous section, for a given cache allocation among $L_1$, $L_2$ and $L_3$, caching and delivery of the content are done in an independent fashion across the sublibraries, i.e., there is no coding across the sublibraries in either phase.  As a result, the cache placement and delivery phase for sublibrary $L_2$ corresponds to a cache-aided broadcast network with two receivers and a library composed of three independent files, $\{W_{12},W_{13},W_{23}\}$, of equal size $\R' F$ bits, where each receiver requests two files from the library. Specifically, at each given time, the demand consists of $i)$ one file that is requested by both receivers, and $ii)$ two files, each requested only by one of the receivers.  While CACM schemes available in the literature such as YMA, where an uncoded prefetching strategy is adopted, are optimal for a single-request framework with two receivers, they fall short to achieve optimality in this multiple-request setting, and a new CACM design is needed. The following example illustrates that when receivers request a common file in addition to their distinct demands, coding in the content placement further leverages the caches for reducing the network load.
\begin{example}
Consider the demand where receiver $r_1$ requests files $W_{12}$ and $W_{13}$, and $r_2$ requests files $W_{12}$ and $W_{23}$. When the receivers are equipped with caches of capacity $M  = \frac{1}{2}\R'$, each file is split into two packets of length $\frac{1}{2}\R'$, and the receiver caches are filled as
\begin{align}
&Z_{r_1} = \{W_{12}^{(1)}\oplus W_{13}^{(1)}\oplus W_{23}^{(1)}\},\notag\\
&Z_{r_2} = \{W_{12}^{(2)}\oplus W_{13}^{(2)}\oplus W_{23}^{(2)} \}, \notag
\end{align}
where $W_{s}^{(i)}$ denotes packet $i$ of file $W_s$.
The codeword 
$$Y = \{W_{12}^{(1)},\;    W_{12}^{(2)},\;   W_{13}^{(2)},   \; W_{23}^{(1)}\}$$
 enables both receivers to losslessly recover their requested packets as follows:
\begin{itemize}
	\item[-] In addition to receiving $\{W_{12}^{(1)},\;    W_{12}^{(2)},\;   W_{13}^{(2)}  \}$, receiver $r_1$  can decode $ W_{13}^{(1)}$ by combining its cache content with the received packets $W_{12}^{(1)}$ and $W_{23}^{(1)}$.	
	\item[-] Similarly, $r_2$ receives the requested packets $\{W_{12}^{(1)},\;    W_{12}^{(2)},\;   W_{23}^{(1)}  \}$, and is also able to decode $W_{23}^{(2)}$ using its cache content and the transmitted packets $W_{12}^{(2)}$ and $W_{13}^{(2)}$.
\end{itemize}
This cache placement results in a delivery rate equal to $2\R'$, whereas an uncoded prefetching scheme, such as the one in \cite{yu2016exact}, achieves a rate of $\frac{7}{3}\R'$.
\end{example}
\vspace{5pt}
The above example provides an optimal placement and delivery strategy for the two-request  network for $M=\frac{1}{2}\R'$. Similar arguments can be made for other memory sizes, the details of which can be found in \cite{Journal}, where it is shown that the  memory-rate pairs, $(M,\,R)$,
$$\Big\{    \Big(0,\,3\R'\Big),\,  \Big(\frac{1}{2}\R', 2\R'\Big),\, \Big(\R', \frac{3}{2}\R'\Big),\, \Big(\frac{3}{2}\R',\,\R'\Big),\, \Big(3\R',\,0\Big)    \Big\},$$
are  achievable and optimal,  while the pair $(M,\,R) =( 2\R'  ,\,   \frac{2}{3}\R')$ is achievable  but its optimality is not proven yet. As in \cite{maddah14fundamental}, through memory-sharing the lower convex envelope of the points given above is achievable, resulting in the peak delivery rate $ R_{L_2}(M,\R')$, given as 
\begin{equation} 
R_{L_2}(M,\R') = \begin{cases}
3\R'-  2M ,                
& \; M \in  [0, \frac{1}{2}\R'  )  \\
\frac{5}{2}\R'-M  ,                
& \; M \in  [\frac{1}{2}\R', \frac{3}{2}\R'  )\\
2\R'-\frac{2}{3} M ,                
& \;M \in  [\frac{3}{2}\R', 3 \R' ] \notag
\end{cases}
\end{equation}

\subsection{Upper Bound on $\RGWstarR$}\label{subsec: upperbound}
The following theorem provides the delivery rate achieved by the proposed multiple-request CACM scheme, with the cache encoder and multicast encoder as described in the beginning of Sec.~\ref{sec: achievable GW-CACM}. 
\begin{theorem}\label{thm:achievable rate peak}
	Given a cache capacity $M$ and a rate triplet $\Rscr$, the peak rate achieved by the proposed multiple-request CACM  scheme is given by 
	\resizebox{.999\linewidth}{!}{
	\begin{minipage}{0.4\textwidth}
	\begin{align} 
	& \RubR  =  \notag\\ 
	& \begin{cases} 
	\R_0 + 3\R'+2\R -2M,   & M \in \Big[0,\, \frac{1}{2}\R' \Big)  \\
	\R_0 + \frac{5}{2}\R'+2\R -M, 	&M\in\Big[ \frac{1}{2}\R' ,  \, \R_0+\frac{3}{2}(\R'+\R)   \Big) \\
	\frac{2}{3}\R_0 + 2\R'+\frac{3}{2}\R -\frac{2}{3}M,   &M\in\Big[ \R_0+\frac{3}{2}(\R'+\R)  ,  \, \R_0+3\R'+\frac{3}{2}\R  \Big) \\
	\frac{1}{3}\R_0 + \R'+\R -\frac{1}{3}M,	 & M\in \Big[\R_0+3\R'+\frac{3}{2}\R ,\,  \R_0+3\R'+3\R\Big]. \notag
	\end{cases} 
	\end{align}
\end{minipage}}
\end{theorem}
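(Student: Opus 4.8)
The plan is to compute the peak rate of the proposed multiple-request CACM scheme by separately accounting for the contribution of each of the three sublibraries $L_1, L_2, L_3$ under the given cache allocation, and then taking the maximum over demand vectors $\dbf \in \mathcal D$. Since the scheme caches and delivers each sublibrary independently (no coding across sublibraries), the total transmitted length is the sum of the three per-sublibrary codeword lengths, so $\RubR$ is the sum of the three rate contributions evaluated at whatever portion of the cache each sublibrary receives under the water-filling allocation described in the Cache Encoder. First I would fix the per-sublibrary budgets: in regime $M\in[0,\tfrac32\R')$ sublibrary $L_2$ gets all of $M$ while $L_1,L_3$ get nothing; for larger $M$ one peels off $\tfrac32\R'$ for $L_2$, then fills $L_3$ up to $\R_0$, then $L_1$ up to $\tfrac32\R$ (the memory-sharing ``knee'' of YMA for two users and three files of size $\R F$), with any surplus cycling back — this is exactly the sequence of breakpoints $\tfrac12\R'$, $\R_0+\tfrac32(\R'+\R)$, $\R_0+3\R'+\tfrac32\R$, $\R_0+3\R'+3\R$ that delimit the four cases.

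Next I would invoke the three known per-sublibrary rates. For $L_3=\{\Wsf_{123}\}$ under LFU with naive multicast, the rate is simply $(\R_0 - m_3)^+$ where $m_3$ is the memory devoted to $L_3$, since all receivers cache the same bits and the uncached remainder $\R_0-m_3$ is broadcast once. For $L_1=\{\Wsf_1,\Wsf_2,\Wsf_3\}$, three independent files of size $\R F$ with two users, the YMA rate from \cite{yu2016exact} evaluated at the relevant memory points gives the piecewise-linear function through $(0,2\R)$, $(\tfrac32\R,\R)$, $(3\R,0)$; at worst-case demand (both users wanting distinct files) this is the governing curve. For $L_2$, I use $R_{L_2}(m_2,\R')$ from Sec.~\ref{sec: new scheme} exactly as stated, at the worst-case two-request demand — namely the demand in the Example, where the common file costs the most. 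Then in each of the four memory regimes I substitute $m_2, m_3, m_1$ as functions of $M$ into $R_{L_2}(m_2,\R') + (\R_0-m_3)^+ + R_{\text{YMA}}(m_1,\R)$ and collect terms; each regime produces an affine function of $M$, and a short check confirms continuity at the breakpoints, matching the four-line expression in the statement.

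The step I expect to be the main obstacle is verifying that the worst-case demand $\dbf$ for the \emph{original} three-file problem really does induce the worst-case demand \emph{simultaneously} in all three sublibrary subproblems, so that the maximum over $\dbf$ of the sum equals the sum of the individual maxima. Because the descriptions each receiver needs from $L_2$ are determined by the full demand vector (one common-to-two description shared, two distinct), one must argue that the demand $\dbf=(1,2)$ type — two distinct original files — is the one that is simultaneously worst for $L_1$ (distinct private descriptions, YMA load maximized), for $L_2$ (one shared $\Wsf_{12}$ plus $\Wsf_{13},\Wsf_{23}$ distinct, matching the Example), and for $L_3$ (naive multicast load is demand-independent, so trivially no worse). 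Demands where the two users request the same file reduce the $L_1$ and $L_2$ loads and cannot be worse, so the envelope is achieved at distinct demands; once this alignment is established the rest is the bookkeeping above. I would also need to confirm that the water-filling cache split is the one minimizing this summed rate — but since the theorem only claims the rate \emph{achieved by the described scheme}, it suffices to plug in the stated allocation and simplify, deferring the optimality of that allocation to \cite{Journal}.
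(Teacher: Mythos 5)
Your decomposition — sum the per-sublibrary rates under the described cache split, then argue the worst-case original demand $\dbf$ simultaneously drives each subproblem to its worst case — is exactly the right strategy, and your handling of the demand-alignment concern (distinct original demands give distinct $L_1$ demands, the one-common-two-distinct $L_2$ pattern of the Example, and a demand-independent $L_3$ load) is sound. The paper defers its proof to the journal version, but this is clearly the intended route, and your observation that the breakpoint at $\tfrac12\R'$ comes from $R_{L_2}$ changing slope rather than from the cache allocation changing is correct.

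There is, however, a numerical error that would derail the computation: the YMA corner points for $K=2$ users and $N=3$ files of size $\R F$ are $(0,\,2\R)$, $\big(\tfrac32\R,\,\tfrac12\R\big)$, $(3\R,\,0)$, not $(0,\,2\R)$, $\big(\tfrac32\R,\,\R\big)$, $(3\R,\,0)$ as you wrote. At $t=1$ each file is split into $\binom{2}{1}=2$ subfiles of size $\tfrac12\R$, each user caches one subfile per file ($m_1=\tfrac32\R$), and for distinct demands a single XOR of length $\tfrac12\R$ serves both users — giving rate $\tfrac12\R$, consistent with $R_t=\binom{K}{t+1}/\binom{K}{t}=\tfrac12$. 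With your value of $\R$, the slopes become $-\tfrac23$ and $-\tfrac23$ instead of $-1$ and $-\tfrac13$, and regimes 2, 3, 4 of $\RubR$ come out wrong; for instance regime 4 would read $\tfrac23\R_0+2\R'+2\R-\tfrac23M$ instead of $\tfrac13\R_0+\R'+\R-\tfrac13M$. Your own continuity check at the regime boundaries would also fail (the second-regime expression would not remain affine as $m_1$ crosses $0$). Once you replace the middle YMA corner by $\big(\tfrac32\R,\,\tfrac12\R\big)$, plugging $R_{L_2}(m_2,\R')+(\R_0-m_3)^+ + R_{\text{YMA}}(m_1,\R)$ into the four (sub)regimes does reproduce the stated $\RubR$ line by line, and the breakpoints merge to the four intervals in the theorem exactly as you describe.
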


\begin{proof}
	The proof of is given in \cite{Journal}.
\end{proof}

The rate expression given in Theorem \ref{thm:achievable rate peak} is the rate achieved by the proposed multiple-request CACM scheme for a given set of compression rates in the Gray-Wyner region. Naturally, the  overall scheme can be optimized over in the Gray-Wyner region such that the rate achieved in the second step is minimized. The peak rate  achieved by the overall GW-CACM scheme that uses the proposed multiple-request  CACM scheme in the second step with an optimized rate-tuple in the Gray-Wyner region, denoted by $\Rub$, is defined as 
\begin{align}
& \Rub = \inf\{  \RubR :\,   \Rscr\in\GWregion\}. \notag
\end{align}

\section{Optimality Results}~\label{sec:optimality}
	In this section, we provide a lower bound on the optimal peak rate-memory function, $\Rstar$, which is later used to evaluate the performance of the proposed scheme.


\subsection{Lower Bound on $\Rstar$}\label{subsec: optimal LB}
\begin{theorem}\label{thm:LowerBound}
	For a broadcast caching network with two receivers, cache capacity $M$, and a library composed of three files generated by the  distribution $p(x_1,x_2,x_3)$, a lower bound on $\Rstar$, the optimal peak rate-memory function, is given by 
	\begin{align}
	\Rlb&  =  \inf \bigg \{ R: \quad   \notag\\
	R & \geq   \max_{i,j }  H(\Xsf_i, \Xsf_j)-2M , \notag\\
	R & \geq \frac{1}{2} \Big(\max_{i,j}  H(\Xsf_i, \Xsf_j )-M\Big), \notag\\
	R & \geq \frac{1}{3}\Big(H(\Xsf_1, \Xsf_2 , \Xsf_3)-M\Big), \notag\\
	R & \geq \frac{1}{2}\Big( H(\Xsf_1, \Xsf_2, \Xsf_3) +   \max_{i} H(\Xsf_i) \Big) - M \bigg \}. \notag
	\end{align}
	
\end{theorem}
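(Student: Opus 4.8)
The plan is to establish each of the four inequalities separately by a cut-set / entropy argument that exploits the structure of an admissible CACM scheme. Fix an arbitrary sequence of CACM schemes achieving rate $R$ with vanishing error probability, fix $F$ large, and work with the random variables $X_1^F, X_2^F, X_3^F$ together with the caches $Z_{r_1}, Z_{r_2}$ (each of entropy at most $FM$) and the transmitted codewords $Y_{\dbf}$ (each of expected length at most $FR + o(F)$). By Fano's inequality, conditioned on the right side information each requested file is essentially determined, so $H(X_{d_{r_k}}^F \mid Y_{\dbf}, Z_{r_k}) = o(F)$ for every demand $\dbf$. All four bounds then follow by carefully choosing a demand vector and a subset of $(Y, Z)$ variables whose total ``budget'' must cover the entropy of the files that can be decoded from them.

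First I would prove the bound $R \ge \max_{i,j} H(\Xsf_i,\Xsf_j) - 2M$. Pick the worst pair $(i,j)$ and consider the demand $\dbf = (i,j)$. From $Y_{\dbf}$, $Z_{r_1}$ we recover $X_i^F$; from $Y_{\dbf}$, $Z_{r_2}$ we recover $X_j^F$; hence $(X_i^F, X_j^F)$ is a function (up to $o(F)$ entropy) of $(Y_{\dbf}, Z_{r_1}, Z_{r_2})$, giving $F\,H(\Xsf_i,\Xsf_j) \le H(Y_{\dbf}) + H(Z_{r_1}) + H(Z_{r_2}) + o(F) \le F R + 2FM + o(F)$, and dividing by $F$ and letting $F\to\infty$ yields the claim. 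For the second bound $R \ge \tfrac12(\max_{i,j}H(\Xsf_i,\Xsf_j) - M)$, I would use two successive demands to the same receiver: with $\dbf_1 = (i,\cdot)$ and $\dbf_2 = (j,\cdot)$, receiver $r_1$ recovers $X_i^F$ from $(Y_{\dbf_1}, Z_{r_1})$ and $X_j^F$ from $(Y_{\dbf_2}, Z_{r_1})$, so $(X_i^F,X_j^F)$ is (essentially) a function of $(Y_{\dbf_1}, Y_{\dbf_2}, Z_{r_1})$, giving $F\,H(\Xsf_i,\Xsf_j) \le 2FR + FM + o(F)$. The third bound $R \ge \tfrac13(H(\Xsf_1,\Xsf_2,\Xsf_3) - M)$ is analogous, using three demands $(1,\cdot),(2,\cdot),(3,\cdot)$ to $r_1$ so that all three files are recoverable from $(Y_{\dbf_1}, Y_{\dbf_2}, Y_{\dbf_3}, Z_{r_1})$, yielding $F\,H(\Xsf_1,\Xsf_2,\Xsf_3)\le 3FR + FM + o(F)$.

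The fourth inequality $R \ge \tfrac12\big(H(\Xsf_1,\Xsf_2,\Xsf_3) + \max_i H(\Xsf_i)\big) - M$ is the delicate one and I expect it to be the main obstacle. Here I would combine both receivers and two transmissions: take $\dbf_1 = (i,j)$ and $\dbf_2 = (k, \cdot)$ for the three distinct indices $\{i,j,k\}$, and choose $i$ to be the maximizer of $H(\Xsf_i)$. From $(Y_{\dbf_1}, Z_{r_1})$ we get $X_i^F$, from $(Y_{\dbf_1}, Z_{r_2})$ we get $X_j^F$, and from $(Y_{\dbf_2}, Z_{r_1})$ (or $Z_{r_2}$) we get $X_k^F$, so $(X_1^F, X_2^F, X_3^F)$ is essentially a function of $(Y_{\dbf_1}, Y_{\dbf_2}, Z_{r_1}, Z_{r_2})$. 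A naive count gives only $H(\Xsf_1,\Xsf_2,\Xsf_3) \le 2R + 2M$, which is weaker than claimed; the extra $\tfrac12 H(\Xsf_i)$ must come from a sharper chain-rule bookkeeping, e.g. writing $H(X_i^F) \le H(Y_{\dbf_1}, Z_{r_1})$ and $H(X_j^F \mid X_i^F) + H(X_k^F \mid X_i^F, X_j^F)$ bounded using the remaining freedom in $Y_{\dbf_1}, Y_{\dbf_2}, Z_{r_2}$ after accounting for the part of $Y_{\dbf_1}$ already ``used'' to convey $X_i^F$ to $r_1$. Concretely I would expand $2H(X_1^F,X_2^F,X_3^F) + H(X_i^F)$ and bound it by a symmetrized sum over the two receivers of terms of the form $H(Y_{\dbf_1}) + H(Y_{\dbf_2}) + H(Z_{r_k})$, using that $H(X_i^F \mid Z_{r_1}) + H(X_i^F \mid Z_{r_2}) \le H(X_i^F \mid Z_{r_1}) + H(Y_{\dbf_1}) $-type steps together with submodularity of entropy; the key trick is that $X_i^F$ is decoded by \emph{both} receivers in different demands, so its entropy can be ``double-counted'' against the two caches while the joint file entropy is counted against the two transmissions. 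Getting the constants to land exactly on $\tfrac12(H_{123} + \max_i H_i) - M$ is where the careful choice of which demands and which conditioning order matters, and that is the step I would spend the most effort verifying.
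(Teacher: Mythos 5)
Your first three inequalities are established correctly by precisely the cut-set argument you describe: for each you exhibit a collection of transmitted codewords and caches from which the relevant files are decodable, apply Fano, and count the budget $H(Y_{\dbf})\le FR+o(F)$, $H(Z_{r_k})\le FM$. (The paper itself defers the proof to its journal companion, so a line-by-line comparison is not possible here, but these are unambiguously the intended steps, and your derivation is sound.)

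The fourth bound is the one place your write-up stops short of a proof. You correctly diagnose the mechanism --- the single file $X_i^F$ must be ``double-counted'' because it is decodable at \emph{both} receivers under two different demands --- but you leave the second demand underspecified ($\dbf_2=(k,\cdot)$) and explicitly concede you have not pinned down the constants. The gap is resolved by taking $\dbf_1=(i,j)$ and $\dbf_2=(k,i)$ with $\{i,j,k\}=\{1,2,3\}$, so that $r_2$ also decodes $X_i^F$ under $\dbf_2$. Then the argument closes cleanly with only Fano and subadditivity; no sharper submodularity bookkeeping is needed:
\begin{align*}
2FR+2FM+o(F) &\ge H(Y_{\dbf_1})+H(Z_{r_1})+H(Y_{\dbf_2})+H(Z_{r_2})\\
&\ge H(Y_{\dbf_1},Z_{r_1})+H(Y_{\dbf_2},Z_{r_2})\\
&\ge H(Y_{\dbf_1},Z_{r_1},X_i^F)+H(Y_{\dbf_2},Z_{r_2},X_i^F)-o(F)\\
&= 2H(X_i^F)+H(Y_{\dbf_1},Z_{r_1}\mid X_i^F)+H(Y_{\dbf_2},Z_{r_2}\mid X_i^F)-o(F)\\
&\ge 2H(X_i^F)+H(Y_{\dbf_1},Y_{\dbf_2},Z_{r_1},Z_{r_2}\mid X_i^F)-o(F)\\
&\ge 2H(X_i^F)+H(X_j^F,X_k^F\mid X_i^F)-o(F)\\
&= H(X_i^F)+H(X_1^F,X_2^F,X_3^F)-o(F),
\end{align*}
where the third line uses Fano for $r_1$ under $\dbf_1$ and for $r_2$ under $\dbf_2$ (both decode $X_i^F$), the fifth uses subadditivity of conditional entropy, and the sixth uses that $X_j^F$ is recoverable from $(Y_{\dbf_1},Z_{r_2})$ and $X_k^F$ from $(Y_{\dbf_2},Z_{r_1})$. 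Dividing by $F$, letting $F\to\infty$, and choosing $i=\arg\max_i H(\Xsf_i)$ gives exactly $R\ge \tfrac12\big(H(\Xsf_1,\Xsf_2,\Xsf_3)+\max_i H(\Xsf_i)\big)-M$. With this filled in, your plan is complete and the constants do land where claimed.
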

\begin{proof}
	The proof of is given in \cite{Journal}.
\end{proof}

\begin{remark}  
	The outer bound in Theorem \ref{thm:LowerBound}  
	improves the best known bounds for correlated sources given in \cite{lim2016information}, and when particularized to independent sources, matches the corresponding best known bound derived in \cite{yu2017characterizing}. 
\end{remark}

\subsection{Optimality of the Proposed  GW-CACM} 
The following theorem characterizes the performance of the proposed GW-CACM scheme for different regions of $M$, and delineates the cache capacity region for which the scheme is optimal or near optimal. Without loss of generality, we assume a symmetric 3-DMS such that $$H(\Xsf_1,\Xsf_2) = H(\Xsf_1,\Xsf_3) = H(\Xsf_2,\Xsf_3),$$ 
and  
$$H(\Xsf_1) = H(\Xsf_2) = H(\Xsf_3) .$$
\begin{theorem}\label{thm: optimality} 
 
	Let $(\tilde \R_0, \, \tilde\R',\, \tilde\R) $ be a symmetric rate-tuple in the Gray-Wyner region, for which $ \tilde \R_0+ 3\tilde\R'+ 3\tilde\R =H(\Xsf_1, \Xsf_2,\Xsf_3) $ and the rate $\tilde\R$ is maximized. 
	 Then, for $M\in   \Big[ \tilde \R_0+ 3\tilde\R'+ \frac{3}{2}\tilde\R, \,  H(\Xsf_1, \Xsf_2,\Xsf_3) \Big]$, the proposed GW-CACM scheme is optimal, i.e, 
	$$\Rub= \Rstar .$$
	 In addition, for $M\in \Big[ 0,\,  \tilde\R_0+ \frac{3}{2} (  \tilde\R'+ \tilde\R ) \Big) $, 
	\begin{align}
	&\Rub- \Rstar \leq  \frac{1}{2} H(\Xsf_2,\Xsf_3|\Xsf_1) -\tilde\R   , \notag
	 \end{align}
	and for $M\in \Big[  \tilde\R_0+ \frac{3}{2} ( \tilde\R'+ \tilde\R ), \,   \tilde\R_0+ 3 \tilde\R'+\frac{3}{2} \tilde\R    \Big) $, we have
	\begin{align}
	&\Rub- \Rstar \leq  \frac{1}{4} H(\Xsf_2,\Xsf_3|\Xsf_1) - \frac{1}{2} \tilde\R . \notag
	 \end{align}
\end{theorem}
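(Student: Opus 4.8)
The plan is to sandwich the gap between the achievable rate of Theorem~\ref{thm:achievable rate peak} evaluated at the single operating point $\tilde\Rscr=(\tilde\R_0,\tilde\R',\tilde\R)$ and the converse of Theorem~\ref{thm:LowerBound}. Since $\Rub=\inf_{\Rscr\in\GWregion}\RubR\le R_{GW}^{UB}(M,\tilde\Rscr)$ and $\Rstar\ge\Rlb$ by Theorem~\ref{thm:LowerBound}, and since each $\RubR$ is the rate of a valid scheme so that $\Rub\ge\Rstar$, it is enough to upper bound the non-negative quantity $R_{GW}^{UB}(M,\tilde\Rscr)-\Rlb$ on each of the three $M$-intervals. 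Throughout I abbreviate $h_1=H(\Xsf_i)$, $h_2=H(\Xsf_i,\Xsf_j)$, $h_3=H(\Xsf_1,\Xsf_2,\Xsf_3)$ (well defined under the symmetry hypothesis), note $H(\Xsf_2,\Xsf_3|\Xsf_1)=h_3-h_1$, and record the one nontrivial entropy inequality used: for a symmetric source $h_3+h_1\le 2h_2$, equivalently $h_3-h_2\le\tfrac12(h_3-h_1)$, which follows from $H(\Xsf_j|\Xsf_k)\ge H(\Xsf_j|\Xsf_i,\Xsf_k)=H(\Xsf_i|\Xsf_j,\Xsf_k)=h_3-h_2$. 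Only two facts about $\tilde\Rscr$ are needed: $\tilde\R_0+3\tilde\R'+3\tilde\R=h_3$ and $\tilde\R_0,\tilde\R',\tilde\R\ge 0$; in particular the precise value of $\tilde\R$ never enters.

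Using $\tilde\R_0+3\tilde\R'+3\tilde\R=h_3$, the four branches of Theorem~\ref{thm:achievable rate peak} at $\Rscr=\tilde\Rscr$ become $h_3-\tilde\R-2M$, $h_3-\tfrac12\tilde\R'-\tilde\R-M$, $\tfrac23(h_3-M)-\tfrac12\tilde\R$, and $\tfrac13(h_3-M)$, and their breakpoints coincide exactly with the three intervals in the statement (the top endpoint being $h_3=\tilde\R_0+3\tilde\R'+3\tilde\R$). On the high-memory interval $M\in[\tilde\R_0+3\tilde\R'+\tfrac32\tilde\R,\,h_3]$ the achievable rate equals $\tfrac13(h_3-M)$, which is exactly the third converse inequality of Theorem~\ref{thm:LowerBound}; hence $\Rlb\ge\tfrac13(h_3-M)=R_{GW}^{UB}(M,\tilde\Rscr)\ge\Rub\ge\Rstar\ge\Rlb$, so all quantities coincide and $\Rub=\Rstar$, which is the optimality claim.

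For the low-memory regime $M\in[0,\tilde\R_0+\tfrac32(\tilde\R'+\tilde\R))$, which is the union of the first two branches, I bound each branch separately by pairing it with the converse inequality that cancels the $M$-term. On $[0,\tfrac12\tilde\R')$, subtracting the first converse bound $h_2-2M$ from $h_3-\tilde\R-2M$ leaves the constant $h_3-h_2-\tilde\R=H(\Xsf_3|\Xsf_1,\Xsf_2)-\tilde\R\le\tfrac12 H(\Xsf_2,\Xsf_3|\Xsf_1)-\tilde\R$ by the recorded inequality. On $[\tfrac12\tilde\R',\,\tilde\R_0+\tfrac32(\tilde\R'+\tilde\R))$, subtracting the fourth converse bound $\tfrac12(h_3+h_1)-M$ from $h_3-\tfrac12\tilde\R'-\tilde\R-M$ leaves $\tfrac12(h_3-h_1)-\tfrac12\tilde\R'-\tilde\R\le\tfrac12 H(\Xsf_2,\Xsf_3|\Xsf_1)-\tilde\R$ since $\tilde\R'\ge 0$. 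For the intermediate regime $M\in[\tilde\R_0+\tfrac32(\tilde\R'+\tilde\R),\,\tilde\R_0+3\tilde\R'+\tfrac32\tilde\R)$ (the third branch, achievable rate $\tfrac23(h_3-M)-\tfrac12\tilde\R$), I compare against the fourth converse bound for $M\le\tfrac14 h_3+\tfrac34 h_1$ and against the third converse bound for $M\ge\tfrac14 h_3+\tfrac34 h_1$; a short affine computation shows each of the two differences is monotone, both equal $\tfrac14(h_3-h_1)-\tfrac12\tilde\R$ at the crossover $M=\tfrac14 h_3+\tfrac34 h_1$, and neither exceeds that value on its region, so the gap is at most $\tfrac14 H(\Xsf_2,\Xsf_3|\Xsf_1)-\tfrac12\tilde\R$.

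The affine arithmetic is routine; the part that carries the weight, and the step I expect to require the most care, is the pairing: deciding, for each sub-interval, which of the four converse inequalities of Theorem~\ref{thm:LowerBound} to subtract from the relevant branch of $R_{GW}^{UB}(M,\tilde\Rscr)$ so that the $M$-dependence cancels (or is controlled) and the leftover constant collapses to the target via $h_3+h_1\le 2h_2$ and the non-negativity of $\tilde\R'$. One must also check that the chosen bounds jointly cover the whole of each regime — in particular that the single crossover $M=\tfrac14 h_3+\tfrac34 h_1$ suffices to split the intermediate regime between the third and fourth converse bounds — and confirm that the symmetry hypothesis is invoked precisely where claimed, since without $h_3+h_1\le 2h_2$ the low-memory estimate would not close.
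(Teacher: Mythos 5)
Your argument is correct, and it takes the approach one would expect for a result of this form (the paper defers the full proof to the journal version, so I cannot compare line by line). You evaluate the achievable rate $\RubR$ of Theorem~\ref{thm:achievable rate peak} at the fixed point $\tilde\Rscr$, note that the branch breakpoints then become exactly the three $M$-intervals in the statement with top endpoint $h_3$, and pair each branch with the converse inequality from Theorem~\ref{thm:LowerBound} that cancels the $M$-dependence. I checked all the affine arithmetic and it holds: branch~1 minus the first converse gives $h_3-h_2-\tilde\R$; branch~2 minus the fourth converse gives $\tfrac12(h_3-h_1)-\tfrac12\tilde\R'-\tilde\R$; branch~3 against the third and fourth converse bounds are both monotone and meet at $M=\tfrac14 h_3+\tfrac34 h_1$ with common value $\tfrac14(h_3-h_1)-\tfrac12\tilde\R$; branch~4 coincides exactly with the third converse, forcing $\Rub=\Rstar$. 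The key entropy fact $h_3+h_1\le 2h_2$, equivalently $H(\Xsf_3\mid\Xsf_1,\Xsf_2)\le H(\Xsf_2\mid\Xsf_1)$, is correctly derived from conditioning reducing entropy together with symmetry, and it is used precisely where needed (branch~1). Two small but worthwhile observations you make explicitly and which deserve emphasis: the proof only uses $\tilde\R_0+3\tilde\R'+3\tilde\R=h_3$ and nonnegativity of $\tilde\R'$, not the maximality of $\tilde\R$ (the latter only serves to shrink the stated gap bounds, which decrease in $\tilde\R$); and the monotone-plus-crossover argument for branch~3 is robust to whether $\tfrac14 h_3+\tfrac34 h_1$ actually lies inside the intermediate interval, since the two gap functions are monotone in opposite directions and agree at the crossover.
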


\begin{proof}
	The proof of is given in \cite{Journal}.
\end{proof}

 \begin{remark}\label{rmk: not wyner common} 
	Theorem \ref{thm: optimality} suggests that operating at a point for which $\R_0+3\R'+3\R = H(\Xsf_1, \Xsf_2,\Xsf_3)$, and where the rate corrsponding to the descriptions in the private sublibrary is maximized allows us to increase the range of optimilaty in terms of the cache capacity, and also decreases the gap to optimlaity for other values of the capacity.  Analogously, it was shown in \cite{ISIT2017} that for two correlated files, it is desirable to maximize the rate of the private descriptions subject to a simialar condition on the sum rate of the entire descriptions.
	
\end{remark}

\section{Conclusions}~\label{sec:conclusion}
In this paper we have studied the fundamental rate-memory trade-off for the worst-case demand in the two-user cache-aided broadcast network with three correlated files. We have proposed a two-step achievable scheme, in which the files are first compressed using Gray-Wyner source coding, and then the encoded descriptions are treated as independent content by a multiple-request cache-aided coded multicast scheme. As a means to designing an achievable scheme for the second step of the Gray-Wyner-based scheme, we have also proposed a new scheme for a network with two users and three independent files, where each user requests two of the files. The proposed scheme uses coded placement in the caches to achieve optimality for small cache capacities. We have characterized the rate-memory trade-off in such two-step schemes and analyzed the optimality of the overall proposed scheme with respect to a lower bound on the peak delivery rate. 
 
 \bibliographystyle{IEEEtran}
 \bibliography{References}
 
%
%

\end{document}